\newtheorem{lemma}{Lemma}
\newtheorem{remark}{Remark}
\newtheorem{proposition}{Proposition}
\newtheorem{theorem}{Theorem}
\theoremstyle{definition}
\newtheorem{defn}{Definition}
\begin{document}

\title{\LARGE \bf
Maneuvering and robustness issues in undirected displacement-consensus-based formation control
}
\author{Hector Garcia de Marina%
	\thanks{H. G. de Marina is with the Department of Computer Architecture and Automatic Control at the Faculty of Physics, Universidad Complutense de Madrid, ~e-mail: hgarciad@ucm.es . This work has been supported by the grant \emph{Atraccion de Talento} 2019-T2/TIC-13503 from the Government of the Autonomous Community of Madrid, and by the Spanish Ministry of Science and Innovation under research Grant RTI2018-098962-B-C21.}
}


\maketitle

\begin{abstract}
	In this paper, we first propose a novel maneuvering technique compatible with displacement-consensus-based formation controllers. We show that the formation can be translated with an arbitrary velocity by modifying the weights in the consensus Laplacian matrix. In fact, we demonstrate that the displacement-consensus-based formation control is a particular case of our more general method. We then uncover robustness issues with undesired steady-state motions and resultant distorted shapes in undirected displacement-consensus-based formation control. In particular, these issues are triggered when neighboring agents mismeasure their relative positions, e.g., their onboard sensors are misaligned and have different scale factors. We will show that if all the sensing is close to perfect but different among the agents, then the stability of the system is compromised. Explicit expressions for the eventual non-desired velocity and shape's distortion are given as functions of the scale factors and misalignments for formations based on tree graphs.
\end{abstract}


\IEEEpeerreviewmaketitle


%



\section{Introduction}
Distributed robot swarms can be more effective, flexible, fault-tolerant, and scalable than the traditional monolithic task-specific robot. A distributed robot swarm creates global behaviors emerging from the local interactions between its members. For example, a swarm can display a particular geometric shape by having its individuals controlling their geometric relations in between such as relative positions, relative angles, or distances. However, the scientific community is still on the development of reliable and systematic methods for the control of robot swarms \cite{yang2018grand}. In particular, a series of robustness issues show up when robots have a different perception than their neighbors while they run distributed formation controllers. For example, it was reported that biased range sensors in distance-based formation controllers designed from undirected graphs cause instabilities in the sense of undesired steady-state motions \cite{mou2016undirected,Marina15}. In undirected displacement-based formation control, if agents do not share the same reference for their orientation in 2D, then again an undesired motion shows up for the resulting distorted formation \cite{meng2016formation}. Following the same pattern, it was recently reported that when agents have biased inter-agent angle measurements, again undesired steady-state collective motions are present \cite{chen2019triangular}. Although it has not been generalized yet for arbitrary distributed formation controllers, the robots can correct their mismatched sensors with respect to their neighbors and solve the mentioned issues for some particular formation control strategies by using adaptive controllers \cite{meng2016formation,Marina2017Taming}. Conversely, these robustness issues can be seen as an opportunity to maneuver the whole formation by, for example, injecting sensor biases in purpose to induce (desired) steady-state motions \cite{Marina16,Marina2018stability}.

In this paper, we will focus on displacement-consensus-based formation control. According to the literature \cite{Oh2015}, the term \emph{displacement} refers to neighboring robots (or agents in general) that control their relative positions by only measuring and comparing them with a common relative target vector. In order to have a successful outcome, all the agents must represent the target vectors with respect to the same frame of coordinates. Therefore, it is a requirement that neighboring agents have a common frame of coordinates as a reference, or at least, they must know each other's local reference frames. The term \emph{consensus} appears because if the set of desired relative positions is a collection of zero-valued vectors, then the formation control strategy becomes the standard consensus algorithm by using the Laplacian matrix.

There are two main contributions in this paper. Firstly, we show that by manipulating the weights of the \emph{standard} Laplacian matrix, the swarm can display the target shape plus a desired steady-state translational motion. Similarly as in \cite{Marina2017Taming,Marina2018stability}, where mismatches are introduced in target distances to induce collective motions, in this paper, we introduce mismatches in the weights coming from the Laplacian for the same purpose. We will show that the displacement-consensus-based formation control is a particular case of our proposed methodology. Secondly, we show how different scale factors and misalignments in onboard sensing for measuring relative positions result in an undesired traveling distorted (with respect to the target) shape. If such a mismeasure is close to perfect and equal for all the agents, the resultant shape is distorted but there is no steady-state motion. On the other hand, if the mismeasure is far from perfect, regardless of being equal for all the agents, then in general, the formation is unstable in the sense of having the global (and relative) positions of the agents growing unbounded exponentially fast.

We want to remark that the problem of analyzing the presence of \emph{mismatched compasses} in 2D displacement-based formation control has been discussed in \cite{meng2016formation, ahn2019consensus}. We also cover such a problem in the second half of our paper. However, not only we combine it with the presence of different scale factors, but we provide explicit expressions for the distorted shape and the undesired velocity. Furthermore, the presented analysis and expressions are valid for the $m$-dimensional case.

\section{Preliminaries}
\label{sec: pre}
\subsection{Notation}
In this paper, we will focus on formations of mobile agents in $m\in\mathbb{N}$ dimensions. Given a matrix $A\in\mathbb{R}^{p\times q}$, we define the operator $\overline A := A \otimes I_m \in \mathbb{R}^{pm \times qm}$, where $\otimes$ denotes the Kronecker product. Given a stacked vector $x := \begin{bmatrix}x_1^T & x_2^T & \dots & x_k^T\end{bmatrix}^T$ with $x_i\in\mathbb{R}^m, i\in\{1,\dots,k\}$, we define the operator $D_x := \operatorname{diag}\{x_i\}_{i\in\{1,\dots,k\}}\in\mathbb{R}^{k\times km}$, and $||x||$ denotes its Euclidean norm. Given a set $\mathcal{X}$, we denote by $|\mathcal{X}|$ its cardinality. We denote by $\mathbf{1}_p\in\mathbb{R}^p$ the all-one column vector, and finally, we also denote by $\mathbf{\hat 1}_l\in\mathbb{R}^m, 1\leq l \leq m$ the column vector with $m$ components whose $l$'th position is equal to one and the rest are zero, i.e., one of the elements of the standard basis for the $m$-dimensional Euclidean space.

\subsection{Graph theory}
\label{sec: not}
A \emph{graph} $\mathcal{G} = (\mathcal{V}, \mathcal{E})$ consists of two non-empty sets: the node set $\mathcal{V} = \{1,\dots,n\}$ with $n \geq 2$, and the ordered edge set $\mathcal{E} \subseteq (\mathcal{V}\times\mathcal{V})$. For an arbitrary edge $\mathcal{E}_k = (\mathcal{E}_k^{\text{head}},\mathcal{E}_k^{\text{tail}})$, we call to its first and second element the \emph{tail} and the \emph{head} respectively. The set $\mathcal{N}_i$ containing the neighbors of the node $i$ is defined by $\mathcal{N}_i:=\{j\in\mathcal{V}:(i,j)\in\mathcal{E}\}$. Let $w_{ij}\in\mathbb{R} \neq 0$ a weight associated with the edge $\mathcal{E}_k=(i,j), k\in\{1,\dots,|\mathcal{E}|\}$, then the \emph{Laplacian} matrix $L\in\mathbb{R}^{n\times n}$ of $\mathcal{G}$ is defined as
\begin{equation}
	l_{ij} := \begin{cases}\sum_{k\in\mathcal{N}_i}w_{ik} & \text{if} \quad i = j \\
		-w_{ij} & \text{if} \quad i \neq j \wedge j\in\mathcal{N}_i \\
		0 & \text{if} \quad i \neq j \wedge j\notin\mathcal{N}_i.
	\end{cases}
	\label{eq: L}
\end{equation}
In this paper we deal with the special case of \emph{undirected} graphs. In particular, undirected graphs are \emph{bidirectional} graphs where each edge $\mathcal{E}_k$ is transformed into two directed edges $(i,j)$ and $(j,i)$. For an undirected graph, we choose one of the two arbitrary directions for each $\mathcal{E}_k$ and we construct the following \emph{incidence matrix} $B\in\mathbb{R}^{|\mathcal{V}|\times |\mathcal{E}|}$ of $\mathcal{G}$
\begin{equation}
	b_{ik} := \begin{cases}+1 \quad \text{if} \quad i = {\mathcal{E}_k^{\text{tail}}} \\
		-1 \quad \text{if} \quad i = {\mathcal{E}_k^{\text{head}}} \\
		0 \quad \text{otherwise.}
	\end{cases}
	\label{eq: B}
\end{equation}
For an undirected graph, if $\omega_{ij} = \omega_{ji} = \omega_k$ with $k$ corresponding to the edge $\mathcal{E}_k$, and we stack all the edges' weights $w_k$ in $w\in\mathbb{R}^{|\mathcal{E}|}$, then it can be checked the following relation
\begin{equation}
L = BD_wB^T.
	\label{eq: Lwk}
\end{equation}
If the graph $\mathcal{G}$ is \emph{connected}, then the Laplacian matrix $L$ has a single eigenvalue equal to zero, whose associated eigenvector is $\mathbf{1}_n$. Note that if $w_k > 0, \forall k$, then $L$ is positive semidefinite.

\section{Displacement-consensus-based formation control}
\subsection{Frameworks and desired shapes}
We consider a team consisting of $n\geq 2$ agents where each agent $i$ has a position $p_i\in\mathbb{R}^m$. We stack all the positions $p_i$ in a single vector $p\in\mathbb{R}^{mn}$ and we call it \emph{configuration}. We define a \emph{framework} $\mathcal{F}$ as the pair $(\mathcal{G}, p)$, where we assign each agent's position $p_i$ to the node $i\in\mathcal{V}$, and the graph $\mathcal{G}$ establishes the set of neighbors $\mathcal{N}_i$ for each agent $i$.

We choose an arbitrary configuration of interest $p^*$ for the team of agents, and we split it as
\begin{equation}
	p^* = \left(\mathbf{1}_n \otimes p_{\text{c.m.}}\right) + p^*_c,
	\label{eq: pstar}
\end{equation}
where $p_{\text{c.m.}}\in\mathbb{R}^{m}$ is the position of the \emph{center of mass} of the configuration and $p_c^*\in\mathbb{R}^{mn}$, starting from the center of mass, gives the \emph{appearance} to the formation as in the example shown in Figure \ref{fig: pstar}. Without loss of generality, and for the sake of simplicity, we set $p_{\text{c.m.}} = 0$ in (\ref{eq: pstar}), i.e., $p^* = p^*_c$.
\begin{figure}
\centering
\begin{tikzpicture}[line join=round,scale=1]
\filldraw(0,0) circle (2pt);
\filldraw(1.5,0) circle (2pt);
\filldraw(1.5,1.5) circle (2pt);
\filldraw(0,1.5) circle (2pt);
\draw[draw=black,arrows=->](-2,0)--(-1.5,0);
\draw[draw=black,arrows=->](-2,0)--(-2,0.5);
\draw[draw=black,arrows=->](-2,0)--(0.75,0.75);
\draw[draw=black,arrows=->](0.75,0.75)--(0+0.05,1.5-0.05);
\draw[draw=black,arrows=->](0.75,0.75)--(0+0.05,0+0.05);
\draw[draw=black,arrows=->](0.75,0.75)--(1.5-0.05,1.5-0.05);
\draw[draw=black,arrows=->](0.75,0.75)--(1.5-0.05,0+0.05);
\node at (-1.95,-0.25) {\small $O_g$ \normalsize};
\node at (0-0.2,1.2) {\small $p_{c_1}^*$ \normalsize}; 
\node at (0-0.2,0.75-0.45) {\small $p_{c_3}^*$ \normalsize}; 
\node at (1.5+0.2,0.75+0.45) {\small $p_{c_2}^*$ \normalsize}; 
\node at (1.5+0.2,0.75-0.4) {\small $p_{c_4}^*$ \normalsize}; 
\node at (-0.5,0.65) {\small $p_{\text{c.m.}}$ \normalsize}; 
\end{tikzpicture}
	\caption{Example of a particular 2D configuation $p^*$ that can be constructed by $p^* =  (\mathbf{1}_4 \otimes p_{\text{c.m.}}) + [p_{c_1}^{*T} \, p_{c_2}^{*T} \, p_{c_3}^{*T} \, p_{c_4}^{*T}]^T$, where $p_{\text{c.m.}}$ is the center of mass of the desired formation.}
\label{fig: pstar}
\end{figure}
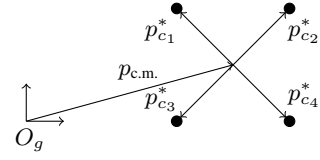

We now define the concept of \emph{desired shape} constructed from the configuration of interest or \emph{reference shape} $p^*$:
\begin{defn}
	The framework (or formation) is at the \emph{desired shape} when 
	\begin{equation}
	p \in \mathcal{S} := \{p \, : \, p = p^* + (\mathbf{1}_n \otimes b), \, b\in\mathbb{R}^m \}.
	\label{eq: dshape}
	\end{equation}
\end{defn}
Then, we can deduce that only translations of $p^*$ are admissible as desired shape.  

\subsection{Agents' dynamics and shape stabilization}
\label{sec: dyn}
In this paper we consider that the position's dynamics of each agent $i\in\mathcal{V}$ are modelled by the following single-integrator
\begin{equation}
	\dot p_i = u_i,
	\label{eq: dyn}
\end{equation}
where $u_i\in\mathbb{R}^m$ is the control action for the corresponding agent $i$. Since the displacement-consensus-based formation control is distributed, then the agent $i$ only has access to relative information with respect to its neighbors in $\mathcal{N}_i$. This requirement implies that the maneuvering technique to be introduced in this paper can only count on the same available information as well. In particular, such a local available information is the set of relative positions $z_{ij} = p_i - p_j, (i,j)\in\mathcal{E}$. We can calculate the stacked vector of sensed relative positions using the incidence matrix as follows
\begin{equation}
z = \overline B^Tp.
	\label{eq: z}
\end{equation}

In order to have a distributed control action $u_i$, it must be of the form
\begin{equation}
	u_i := f(z_{ij}), \quad j\in\mathcal{N}_i
	\label{eq: fij}
\end{equation}
where we set $f : \mathbb{R}^{m|\mathcal{N}_i|} \rightarrow \mathbb{R}^m$ to be linear. We then combine (\ref{eq: fij}) and (\ref{eq: dyn}) in the following compact form
\begin{equation}
\dot p = u,
	\label{eq: pdyn}
\end{equation}
where $u\in\mathbb{R}^{mn}$ is the stacked vector of control actions $u_i$.

The displacement-consensus-based controller (\ref{eq: fij}) for the dynamics (\ref{eq: dyn}) is given by \cite{Oh2015}
\begin{align}
	u_i &= - \sum_{j\in\mathcal{N}_i} \omega_{ij} \left(p_i - p_j - (p^*_i - p^*_j)\right) \nonumber \\ &= -\sum_{j\in\mathcal{N}_i} \omega_{ij} \left(z_{ij} -z^*_{ij}\right),
	\label{eq: uLcon}
\end{align}
which can be written in compact form for (\ref{eq: pdyn}) as
\begin{equation}
	u = -\overline L(p - p^*),
	\label{eq: Lcon}
\end{equation}
where $L$ is as in (\ref{eq: Lwk}), i.e., $\omega_{ij}=\omega_{ji}=\omega_k$, we set $\omega_k > 0$, and if $p^* = 0$, then (\ref{eq: Lcon}) becomes the standard consensus algorithm. Following the protocol (\ref{eq: Lcon}), if $\mathcal{G}$ in the framework is connected, then we have that $p(t) \to p^* + \sum_{l=1}^m c_l(\mathbf{1}_n \otimes \mathbf{\hat 1}_l)$ as $t\to\infty$, where each $c_l\in\mathbb{R}$ is determined by the initial condition $p(0)$. We note that $p(t)$ converges to a point where the formation stays at the desired shape $\mathcal{S}$ and there is no further stationary motion or maneuvering.

\section{Shape maneuvering for displacement-consensus-based formation control}
\label{sec: displacement}
The maneuvering strategy in this paper consists in modifying the control (\ref{eq: Lcon}) as
\begin{equation}
u = -\overline L_m p + \overline L p^*,
\end{equation}
where $L_m\in\mathbb{R}^{n\times n}$ is a \emph{modified Laplacian} matrix with modified weights from the original $L$. Obviously, if $L_m = L$, we recover (\ref{eq: Lcon}). This modification will allow the formation to converge to the desired shape defined from $p^*$ and to a desired translational motion. In particular, the modification of the weights $\omega_{ij}$ will be done by exploiting the relative position vectors between the agents in the reference shape $p^*$.

\subsection{Modified Laplacian matrix for motion control}
Let us consider the following weights for constructing a modified Laplacian matrix
\begin{equation}
\tilde\omega_{ij} = \omega_{ij} - \kappa\mu_{ij},
\label{eq: wmod}
\end{equation}
where $\omega_{ij}$ are the weights of the original Laplacian matrix (\ref{eq: L}), $\kappa\in\mathbb{R}$, $\mu_{ij}\in\mathbb{R}$, and if $j\notin\mathcal{N}_i$, then $\mu_{ij} = 0$. We anticipate that the second term on the right-hand side of (\ref{eq: wmod}) is responsible for the steady-state motion of agent $i$. As we will see, $\mu_{ij} \neq \mu_{ji}$ in general. Therefore, $\tilde\omega_{ij} \neq \tilde\omega_{ji}$ for the modified Laplacian matrix.

In a first approach, we design the desired steady-state velocity $v^*_i\in\mathbb{R}^m$ for each agent $i\in\mathcal{V}$ as linear combinations of the desired relative positions $(p^*_i - p^*_j), (i,j)\in\mathcal{E}$. Formally, the steady-state velocity for agent $i$ can be designed by finding a set of $\mu_{ij}$'s that satisfies
\begin{equation}
	v^*_i = \kappa \sum_{j\in\mathcal{N}_i} \mu_{ij} (p^*_i - p^*_j).
	\label{eq: vs}
\end{equation}
If we can guarantee the convergence of the agents to such desired relative positions, we will see that, consequently, the agents will converge to the desired steady-state velocity as well. Note that the parameters $\mu_{ij}$ define the direction of $v^*_i$ and $\kappa$ regulates its speed. We remind that only the translation of $p^*$ is allowed so that the agents stay at the desired shape $\mathcal{S}$. This fact implies that $v^*_i = v^* \in\mathbb{R}^m, \forall i\in\mathcal{V}$, as illustrated in the example in Figure \ref{fig: sqdis}. 
\begin{remark}
In some particular cases, it is not possible to construct an arbitrary $v^*$ by only following (\ref{eq: vs}), e.g., agent $i$ has only one relative position available like agent $4$ in Figure \ref{fig: sqdis}. In fact, in order to have a solution for an arbitrary $v_i^*$ in (\ref{eq: vs}), a necessary condition for agent $i$ is to have at least $m$ neighbors. Nevertheless, to overcome such an issue, at the end of this section we will provide a more general approach for the design of the motion parameters $\mu_{ij}$ as square matrices instead of real numbers. However, for the sake of clarity in the notation and without loss of generality, we proceed to present the analysis with the motion parameters $\mu_{ij}$ as in (\ref{eq: wmod}) and (\ref{eq: vs}), and later extend the results.
\end{remark}

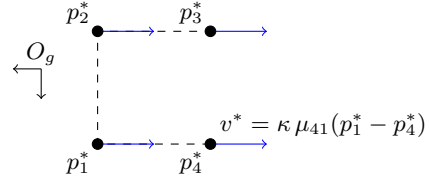
\begin{figure}
\centering
	\begin{tikzpicture}[line join=round,scale=1]
\draw[dashed](1.5,0)--(0,0)--(0,1.5)--(1.5,1.5);
\draw[draw=blue,arrows=->](0,0)--(.75,0);
\draw[draw=blue,arrows=->](1.5,0)--(2.25,0);
\draw[draw=blue,arrows=->](1.5,1.5)--(2.25,1.5);
\draw[draw=blue,arrows=->](0,1.5)--(.75,1.5);
\filldraw(0,0) circle (2pt);
\filldraw(1.5,0) circle (2pt);
\filldraw(1.5,1.5) circle (2pt);
\filldraw(0,1.5) circle (2pt);
\draw[draw=black,arrows=->](-.75,1.0)--(-1.125,1.0);
\draw[draw=black,arrows=->](-.75,1.0)--(-.75,0.6);
		\node at (-.75,1.2) {\small $O_g$ \normalsize};\node at (-.25,-.25) {\small $p_1^*$ \normalsize}; \node at (1.25,1.75) {\small $p_3^*$ \normalsize};\node at (-.25,1.75) {\small $p_2^*$ \normalsize}; \node at (1.25,-.25) {\small $p_4^*$ \normalsize}; \node at (3,.25) {\small $v^* = \kappa\,\mu_{41}(p_1^* - p_4^*)$ \normalsize};
\end{tikzpicture}
	\caption{Four agents displaying a \emph{square} reference shape $p^*$ with $\mathcal{E} = \{(1,2), (2,3), (1,4)\}$. The proposed algorithm makes the \emph{formation's velocity} in closed-loop with the relative positions $z_{ij}(t)$ by modifying the weights in the Laplacian as in (\ref{eq: wmod}) with the motion parameters $\mu_{ij}$. Therefore, once the desired shape $\mathcal{S}$ is achieved, we also achieve the desired velocity $v^*$. In order to stay eventually in $\mathcal{S}$, all the eventual velocities (blue vectors) must be designed equally in (\ref{eq: vs}).}
	\label{fig: sqdis}
\end{figure}

Let us define the components of the following matrix $M\in\mathbb{R}^{|\mathcal{V}|\times |\mathcal{E}|}$
\begin{equation}
	m_{ik} := \begin{cases}\mu_{i\mathcal{E}_k^{\text{head}}} \quad \text{if} \quad i = {\mathcal{E}_k^{\text{tail}}} \\
		-\mu_{i\mathcal{E}_k^{\text{tail}}} \quad \text{if} \quad i = {\mathcal{E}_k^{\text{head}}} \\
		0 \quad \text{otherwise.}
	\end{cases}.
	\label{eq: M}
\end{equation}
Based on the obvious identity $(p_i - p_j) = - (p_j - p_i)$, we can stack all the velocities (\ref{eq: vs}) from all the agents in a compact form as follows
\begin{equation}
\mathbf{1}_n \otimes v^*=\kappa \overline{MB^T}p^* = \kappa \overline\Lambda p^*.\label{eq: vsm}
\end{equation}
Note that we can decouple the different components of $v^*$ and $p^*$ in (\ref{eq: vsm}). For example, for the \emph{x-components} we have the following compact form
\begin{equation}
\mathbf{1}_n \otimes ({^xv}^*) = \kappa MB^T({^xp}^*) = \kappa \Lambda ({^xp}^*).
\label{eq: vs2}
\end{equation}
\begin{lemma}
\label{lem: lmod}
Consider the Laplacian matrix $L$ constructed from an undirected connected graph where $\omega_{ij}=\omega_{ji} > 0$. Also consider $\Lambda$ as in (\ref{eq: vs2}). Then, for a sufficiently small $|\kappa|$, the modified Laplacian matrix $(L-\kappa\Lambda)$ has a single zero eigenvalue whose eigenvector is $\mathbf{1}_n$, and the rest of eigenvalues have positive real part.
\end{lemma}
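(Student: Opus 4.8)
The plan is to treat $L-\kappa\Lambda$ as a perturbation of the symmetric positive semidefinite Laplacian $L$ and to combine one exact algebraic fact (that $\mathbf{1}_n$ survives as an eigenvector for every $\kappa$) with the continuity of the spectrum in the matrix entries.

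First I would establish that $\mathbf{1}_n$ is a right eigenvector of $L-\kappa\Lambda$ associated with the eigenvalue $0$ for all $\kappa$. Since the graph is connected, $L\mathbf{1}_n=0$. For $\Lambda=MB^T$, each column of the incidence matrix $B$ contains exactly one $+1$ and one $-1$, so $\mathbf{1}_n^TB=0$, equivalently $B^T\mathbf{1}_n=0$; hence $\Lambda\mathbf{1}_n=MB^T\mathbf{1}_n=0$, and therefore $(L-\kappa\Lambda)\mathbf{1}_n=0$ identically in $\kappa$. I would also record that $\mathbf{1}_n$ is in general \emph{not} a left null vector, because the column sums of $M$ need not vanish (they equal differences of the form $\mu_{ij}-\mu_{ji}$); this is exactly why the perturbed matrix is nonsymmetric and its nonzero eigenvalues may become complex, which is the reason the statement speaks of positive real part rather than positivity.

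Next I would use that at $\kappa=0$ the matrix is exactly $L$, which by Section~\ref{sec: not} is symmetric positive semidefinite with a simple eigenvalue at $0$ and $n-1$ strictly positive eigenvalues $\lambda_2,\dots,\lambda_n$; set $\lambda_{\min}:=\min_{i\ge 2}\lambda_i>0$. Because the eigenvalues of a matrix depend continuously on its entries, for every $\varepsilon\in(0,\lambda_{\min})$ there is a threshold $\kappa_0>0$ such that for all $|\kappa|<\kappa_0$ the spectrum of $L-\kappa\Lambda$ lies in the $\varepsilon$-neighborhood of $\{0,\lambda_2,\dots,\lambda_n\}$. The $n-1$ eigenvalues emanating from the $\lambda_i$ then stay within distance $\varepsilon<\lambda_{\min}$ of the positive real axis and hence keep positive real part, which settles the second assertion.

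The delicate point, and the step I expect to be the main obstacle, is proving that the eigenvalue sitting near $0$ is exactly $0$ and is algebraically simple, rather than a small cluster or a defective block; the nonsymmetry of $\Lambda$ rules out any orthogonal-diagonalization shortcut. I would handle this with an eigenvalue-counting (argument-principle) argument: fix a small circle $\Gamma$ centered at $0$ with radius below $\lambda_{\min}$, so that at $\kappa=0$ it encloses the single eigenvalue $0$ and excludes $\lambda_2,\dots,\lambda_n$. The number of eigenvalues of $L-\kappa\Lambda$ enclosed by $\Gamma$ is integer-valued and locally constant in $\kappa$ as long as no eigenvalue crosses $\Gamma$; for $|\kappa|<\kappa_0$ none does, so exactly one eigenvalue (counted with algebraic multiplicity) lies inside $\Gamma$. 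Since the first step shows that $0$ is always an eigenvalue and $0\in\mathrm{int}\,\Gamma$, that unique enclosed eigenvalue must be $0$ itself, forcing its algebraic multiplicity to equal one. Combining the three steps yields a simple zero eigenvalue with eigenvector $\mathbf{1}_n$ and the remaining spectrum in the open right half-plane, as claimed.
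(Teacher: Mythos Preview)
Your argument is correct and follows essentially the same route as the paper: factor $L-\kappa\Lambda=(BD_\omega-\kappa M)B^T$, use $B^T\mathbf{1}_n=0$ to keep the zero eigenvalue with eigenvector $\mathbf{1}_n$ for every $\kappa$, and invoke continuity of the remaining eigenvalues to keep them in the open right half-plane for small $|\kappa|$. The one place you go beyond the paper is the eigenvalue-counting step that pins down \emph{algebraic} simplicity of the zero eigenvalue; the paper simply asserts that the zero eigenvalue ``is not perturbed'' and does not separately argue that its multiplicity stays one, so your contour/argument-principle step is a welcome tightening rather than a different approach.
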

\begin{proof}
	We look at $\kappa\Lambda$ as a perturbation matrix of the Laplacian matrix $L = BD_{\omega}B^T$, which is positive semidefinite with a single zero eigenvalue whose eigenvector is $\mathbf{1}_n$ if the graph of the corresponding framework is connected. We then note that
\begin{equation}
	L-\kappa\Lambda = (BD_\omega-\kappa M)B^T.
\end{equation}
	Therefore, the eigenvalue zero and its corresponding eigenvector $\mathbf{1}_n$ coming from $L$ are not perturbed for a generic $\kappa$ with an arbitrary $M$ since $B^T\mathbf{1}_n = 0$. The rest of eigenvalues are continuous functions of $\kappa$, hence, the rest of eigenvalues of $(L-\kappa\Lambda)$ are arbitrarily close to the ones of $L$ (all of them with positive real part) for a sufficiently small $|\kappa|$.
\end{proof}

We will discuss how small should be $|\kappa|$ in the Subsection \ref{sec: kappa}. As in the standard consensus algorithm in the Euclidean space, for the sake of clarity, and without loss of generality, let us focus only on one arbitrary component of the involved vectors in the analysis of the following main result.

\begin{theorem}
\label{thm: movdis}
Consider a one-dimensional, i.e., $m=1$, reference shape $p^*\in\mathbb{R}^{n}$ for the desired shape $\mathcal{S}$ in (\ref{eq: dshape}) and the framework $\mathcal{F} = (\mathcal{G},p)$, whose graph is connected. Consider the following distributed control action for the dynamics (\ref{eq: dyn}):
\begin{equation}
	u_i = - \sum_{j\in\mathcal{N}_i}\Big( \tilde\omega_{ij} (p_i - p_j) - \omega_{ij}(p^*_i - p^*_j)\Big), \, \forall i\in\mathcal{V},
	\label{eq: uwtil}
\end{equation}
	where $\omega_{ij}=\omega_{ji} > 0$, and the modified $\tilde\omega_{ij}$'s are designed following (\ref{eq: wmod}) and (\ref{eq: vs}) such that $v^*_i = v^*\in\mathbb{R}, \forall i\in\mathcal{V}$. If in the modified weights the constant $|\kappa|$ is sufficiently small, then the agents converge to the desired shape with steady-state velocity $v^*$, i.e.,  $p(t) \to \mathcal{S}$ and $\frac{\mathrm{d} p_i(t)}{\mathrm{dt}} \to v^*$ as $t\to\infty$.
\end{theorem}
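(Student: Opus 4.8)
The plan is to recast the closed-loop as a single affine linear system and then reduce the convergence question entirely to the spectral structure of the modified Laplacian already supplied by Lemma \ref{lem: lmod}. First I would stack the control law (\ref{eq: uwtil}). Using the weight split (\ref{eq: wmod}) and the elementary identities $\big((L-\kappa\Lambda)p\big)_i = \sum_{j\in\mathcal{N}_i}\tilde\omega_{ij}(p_i-p_j)$ and $(Lp^*)_i = \sum_{j\in\mathcal{N}_i}\omega_{ij}(p^*_i-p^*_j)$, where $\Lambda = MB^T$ is exactly the matrix of (\ref{eq: vs2}), the dynamics (\ref{eq: pdyn}) take the compact form
\[
\dot p = -(L-\kappa\Lambda)\,p + Lp^*,
\]
so that $(L-\kappa\Lambda)$ is precisely the modified Laplacian built from the weights $\tilde\omega_{ij}$.

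Next I would exhibit a rigidly translating reference shape as a particular solution. Setting $p_p(t) := p^* + \mathbf{1}_n v^* t$ gives $\dot p_p = \mathbf{1}_n v^*$, while the right-hand side evaluated at $p_p$ equals $-(L-\kappa\Lambda)(p^* + \mathbf{1}_n v^* t) + Lp^*$. Since $(L-\kappa\Lambda)\mathbf{1}_n = 0$ by Lemma \ref{lem: lmod}, the time-dependent term vanishes and only $\kappa\Lambda p^*$ survives, which by the design identity (\ref{eq: vs2}) specialised to $m=1$ equals $\mathbf{1}_n v^*$. Hence $p_p$ solves the ODE, lies in $\mathcal{S}$ at every instant, and moves with velocity $v^*$. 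I would then study the error $e := p - p_p$; subtracting the two ODEs cancels both the affine forcing $Lp^*$ and the translating term, leaving the autonomous dynamics $\dot e = -(L-\kappa\Lambda)\,e$.

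To finish, I would invoke Lemma \ref{lem: lmod}: for the fixed, sufficiently small $|\kappa|$, the matrix $L-\kappa\Lambda$ has a single zero eigenvalue with eigenvector $\mathbf{1}_n$ and all other eigenvalues with strictly positive real part, so $-(L-\kappa\Lambda)$ is semistable with one-dimensional centre eigenspace $\operatorname{span}\{\mathbf{1}_n\}$ and every other mode decaying. Consequently $e^{-(L-\kappa\Lambda)t}$ converges to the spectral projector onto $\operatorname{span}\{\mathbf{1}_n\}$, giving $e(t)\to \mathbf{1}_n b_\infty$ for a constant $b_\infty$ fixed by $e(0)$, and $\dot e = -(L-\kappa\Lambda)e \to -(L-\kappa\Lambda)\mathbf{1}_n b_\infty = 0$. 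Assembling, $p(t)\to p^* + \mathbf{1}_n(v^* t + b_\infty)$, so $\operatorname{dist}\!\big(p(t),\mathcal{S}\big) = \|e(t)-\mathbf{1}_n b_\infty\|\to 0$, i.e. $p(t)\to\mathcal{S}$, while $\dot p_i(t) = v^* + \dot e_i(t)\to v^*$ for every $i$.

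The hard part will be the convergence of the matrix exponential, because $\mu_{ij}\neq\mu_{ji}$ in general makes $L-\kappa\Lambda$ non-symmetric and possibly non-diagonalizable, so I cannot expand in an orthonormal eigenbasis as in the symmetric consensus case. The delicate step is to argue that the zero eigenvalue is semisimple, which follows from it being a \emph{single} (algebraically simple) eigenvalue in Lemma \ref{lem: lmod} and hence carrying no Jordan block, so the centre mode contributes a bounded constant rather than a term growing in $t$; this, together with the strict decay of all remaining modes, yields $e^{-(L-\kappa\Lambda)t}\to P$. I would be careful to quote Lemma \ref{lem: lmod} at the fixed small $\kappa$ (not merely in the limit $\kappa\to 0$), so that the nonzero eigenvalues are genuinely bounded away from the imaginary axis and the decay is uniform.
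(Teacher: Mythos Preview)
Your proposal is correct and follows essentially the same approach as the paper: both write the closed loop as $\dot p = -(L-\kappa\Lambda)p + Lp^*$, exhibit the same translating particular solution $p_p(t) = p^* + (\kappa\Lambda p^*)t$ (the paper's $\kappa\Lambda p^*$ is your $\mathbf{1}_n v^*$ by (\ref{eq: vs2})), and reduce the rest to the homogeneous dynamics via Lemma~\ref{lem: lmod}. Your error-signal framing $e=p-p_p$ is exactly the paper's homogeneous/particular decomposition, and your remark on the semisimplicity of the zero eigenvalue mirrors the paper's Jordan-form discussion.
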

\begin{proof}
First we note that (\ref{eq: uwtil}) is of the form of (\ref{eq: fij}), therefore the agent $i$ only needs local information to implement the control action (\ref{eq: uwtil}). We start by plugging (\ref{eq: uwtil}) into (\ref{eq: dyn}) for all the agents to arrive at the following compact form
\begin{equation}
	\dot p = -(L-\kappa\Lambda)p + Lp^*,
	\label{eq: Lwtil}
\end{equation}
which can be trivially rewritten as
\begin{equation}
	\dot p  + (L-\kappa\Lambda)p = Lp^*.
	\label{eq: Lwtil2}
\end{equation}
The solution to the ordinary differential equation (\ref{eq: Lwtil2}) is the solution of its homogeneous part plus a particular solution. It is well known that the solution $p_h(t)$ of the homogeneous equation
\begin{equation}
\dot p  + (L-\kappa\Lambda)p = 0,
\end{equation}
is given by
\begin{equation}
	p(t) = \operatorname{exp}\{-(L-\kappa\Lambda)t\}p(0),
\end{equation}
where the exponential can be calculated from the Jordan form of $(L-\kappa\Lambda)$. In particular, the solution is given by
\begin{equation}
	p_h(t) = c_1\mathbf{1}_n + \sum_{l=2}^n f_l(t,c_l,\dots,c_n,w_l,\dots,w_n)e^{-\lambda_lt},
	\label{eq: ph}
\end{equation}
	where $\lambda_l\in\mathbb{C}$ and $w_l\in\mathbb{R}^n$ are eigenvalues and (possibly generalized) eigenvectors of the modified Laplacian matrix respectively, $c_l\in\mathbb{R}$ are the constants to be calculated from the initial condition $p(0)$, and the functions $f_l(t,c_l,\dots,c_n,w_l,\dots,w_n)$ correspond to linear combinations like $c_lw_l + c_{l+1}(w_{l+1}+w_lt)+\dots$ depending on the algebraic and geometric multiplicity of the corresponding eigenvalues of the modified Laplacian matrix. Indeed, for the first term of (\ref{eq: ph}), according to Lemma \ref{lem: lmod}, we have that $\lambda_1 = 0$ with $w_1 = \mathbf{1}_n$, and for the rest of terms we have that $\lambda_l > 0, 2\leq l\leq n$ if $|\kappa|$ is sufficiently small. Therefore, $p_h(t) \to c_1\mathbf{1}_n$ as $t\to\infty$ as in the standard consensus algorithm.

Now we are going to verify that
	\begin{equation}
		p_p(t) = (\kappa\Lambda p^*) t + p^*,
	\label{eq: pp}
	\end{equation}
is a particular solution of (\ref{eq: Lwtil2}). First, we plug (\ref{eq: pp}) into (\ref{eq: Lwtil2})
	\begin{align}
		\kappa\Lambda p^*+(\kappa L\Lambda p^*)t+Lp^*-(\kappa^2\Lambda^2p^*)t-\kappa\Lambda p^*&=Lp^* \nonumber \\
		(\kappa L\Lambda p^* - \kappa^2\Lambda^2p^*) t &= 0.
		\label{eq: pp2}
	\end{align}
Now we check that according to our design in (\ref{eq: vs}), or (\ref{eq: vs2}) for the compact form, we have that
\begin{equation}
	\kappa^2\Lambda^2 p^* = \kappa^2 MB^T\Lambda p^* = \kappa MB^T (\mathbf{1}_n\otimes v^*) = 0,
\end{equation}
and together with $\Lambda p^* \in \operatorname{Ker}\{L\}$, we can conclude that (\ref{eq: pp2}) is true for all $t$. Hence, (\ref{eq: pp}) is a particular solution of (\ref{eq: Lwtil2}). Consequently, the solution of (\ref{eq: Lwtil2}) is the following combination of (\ref{eq: ph}) and (\ref{eq: pp})
\begin{align}
	p(t) &= p_h(t) + p_p(t) \nonumber \\
	&= c_1\mathbf{1}_n +(\kappa\Lambda p^*) t+ p^* + \sum_{l=2}^n f_l e^{-\lambda_l t}.
	\label{eq: sol}
\end{align}
Again, according to Lemma \ref{lem: lmod}, for a sufficiently small $|\kappa|$ we have that $\lambda_l > 0, 2 \leq l \leq n$ in (\ref{eq: sol}), therefore we can conclude that
\begin{equation}
	p(t) \to c_1\mathbf{1}_n + p^* +(\kappa\Lambda p^*) t \in\mathcal{S}, \quad t\to\infty,
\end{equation}
i.e., the reference shape $p^*$ will move in a translational motion following the constant velocity $\kappa\Lambda p^* = (\mathbf{1}_n\otimes v^*)$, and $c_1$ will depend on the initial condition $p(0)$.
\end{proof}

We remind that in the upcoming Subsection \ref{sec: kappa}, we will see how small should be $|\kappa|$ to apply the Theorem \ref{thm: movdis}.

The Theorem \ref{thm: movdis} is a generalization of the displacement-consensus-based formation control algorithm debriefed in \cite{Oh2015}. Indeed, for $\Lambda = 0$, i.e., for all $\mu_{ij} = 0$, then the particular solution (\ref{eq: pp}) is $p_p(t) = p^*$, i.e., a static desired shape. We remind that Theorem \ref{thm: movdis} was dealing with single components, e.g., the \emph{x-components}, of the involved vectors. For the general $m$-dimensional case with $p^*\in\mathbb{R}^m$, from Theorem \ref{thm: movdis} we can derive straightforwardly that $p(t) \to \sum_{l=1}^m \left(c_l(\mathbf{1}_n \otimes \mathbf{\hat 1}_l)\right) + p^* + (\kappa \overline\Lambda p^*)t\in\mathcal{S}$ as $t\to\infty$, where the different $c_l\in\mathbb{R}$ depend on the initial condition $p(0)\in\mathbb{R}^m$.

\subsection{Comparison with other maneuvering techniques}
A fair observation might remark that the straightforward protocol $u = -\overline L (p-p^*) + \mathbf{1}_n \otimes v^*$ also solves the considered problem of this section in an arguably more natural way. Such a straightforward protocol sets $v^*$ in open-loop as an independent term from the formation. In our protocol (\ref{eq: Lwtil}), the term responsible for the eventual velocity of the formation is $\kappa \Lambda p(t) = \kappa Mz(t)$. This term depends on the relative positions of the formation, which are in closed-loop. For example, the speed and heading of the \emph{formation's velocity} react accordingly to the current scale and orientation of the shape of the formation, and eventually, we have that $\kappa Mz(t)\to v^*$ as $t\to\infty$ since $z(t)\to z^*$ as $t\to\infty$. This reactive property allows the designer to program complex reactive motion behaviors for the formation since they will depend on $z(t)$, e.g., by transitioning between different and possibly time-varying $z^*$. There is also the possibility of extending the proposed technique by allowing time-varying $\mu_{ij}(t)$ in order to program desired time-varying $v^*(t)$ without forgetting that the \emph{formation's velocity} is always coupled with $z(t)$ (and eventually with a possible time-varying $z^*(t)$). This extension can be studied using output regulation techniques such as the \emph{internal model principle} \cite{Marina15} with a network of \emph{leaders-followers}, even possibly allowing communication between agents for the recalculation of new $\mu_{ij}$ if needed.

\subsection{General design for the motion parameters $\mu_{ij}$}
\label{sec: musdis}
The design of an arbitrary $v^*$ by following (\ref{eq: vs}) needs from each agent to have at least $m$ independent $(p_i^* - p_j^*)$ vectors. If this requirement is not met, then in a second approach, we can still construct an arbitrary $v^*$ by employing $\mu_{ij}\in\mathbb{R}^{m\times m}$. For example, for the agent $4$ in Figure \ref{fig: sqdis} we can set $\mu_{41} = \left[\begin{smallmatrix}a & -b \\ b & a\end{smallmatrix}\right], a,b\in\mathbb{R}$, i.e.,
\begin{equation}
v^*_4 = \kappa \left(a\begin{bmatrix}1 & 0 \\ 0 & 1\end{bmatrix}(p_4^*-p_1^*) + b\begin{bmatrix}0 & -1 \\ 1 & 0\end{bmatrix}(p_4^*-p_1^*)\right).
\end{equation}
With this more general approach where $\mu_{ij}$ are matrices, we cannot modify the weights of $L$ anymore as in (\ref{eq: wmod}). In fact, we will modify directly $\overline L$ for the $m$-dimensional case. Let us define
\begin{equation}
\hat\Lambda := \hat M \overline B^T,
	\label{eq: Lamh}
\end{equation}
where $\hat M \in\mathbb{R}^{m|\mathcal{V}|\times m|\mathcal{E}|}$ consists of the following $m \times m$ blocks
\begin{equation}
	\hat M = \begin{bmatrix}\mu_{11} & \cdots & \mu_{1|\mathcal{E}|} \\ \vdots & \ddots & \vdots \\ \mu_{|\mathcal{V}|1} & \cdots & \mu_{|\mathcal{V}||\mathcal{E}|} \end{bmatrix}.
\label{eq: Mhat}
\end{equation}
Note that for the particular case where $\mu_{ij}\in\mathbb{R}$ as in (\ref{eq: M}), then $\hat M = \overline M = M \otimes I_m$, and consequently $\hat\Lambda = \overline\Lambda = \overline{MB^T}$. With these more general definitions on hand, we can generalize the control action (\ref{eq: uwtil}) as
\begin{equation}
	u_i = - \sum_{j\in\mathcal{N}_i} \left(\omega_{ij} \Big( (p_i - p_j) - (p^*_i - p^*_j)\Big) + \kappa \mu_{ij} (p_i - p_j)\right),
	\label{eq: udismu}
\end{equation}
with the corresponding compact form considering all the agents
\begin{equation}
	\dot p = -(\overline L-\kappa\hat\Lambda)p + \overline Lp^*.
	\label{eq: Lwtil3}
\end{equation}
In general, the dynamics of the different Euclidean components of $p(t)$ in (\ref{eq: Lwtil3}) are not decoupled anymore because of (\ref{eq: Mhat}); therefore, we cannot analyze them separately. Nevertheless, the extension of Theorem \ref{thm: movdis} for the closed loop (\ref{eq: Lwtil3}) is still straightforward since the homogeneous part of (\ref{eq: Lwtil3}) is still a \emph{perturbed} consensus algorithm, and its particular solution is $p_p(t) = (\kappa\hat\Lambda p^*)t + p^*$, which is a mere extension of (\ref{eq: pp}). Note that because of the definition of $\hat\Lambda$ in (\ref{eq: Lamh}), the extension of Lemma \ref{lem: lmod} for $\hat\Lambda$ is straightforward as well.

\subsection{How small should be $|\kappa|$?}
\label{sec: kappa}
In general, the requirement of a sufficiently small $|\kappa|$ in Lemma \ref{lem: lmod} is not a conservative condition.
If the graph $\mathcal{G}$ in the framework $\mathcal{F}$ does not contain any \emph{cycles}, i.e., it is a \emph{tree graph}, then we can exploit the fact that $B^TB$ is a positive definite matrix \cite{dimarogonas2008stability} to establish a bound for $|\kappa|$ in Theorem \ref{thm: movdis}. In the context of formation control, such a graph means that no relative position is a function of the others. For the sake of simplicity and without loss of generality, let us set $\omega_k = \omega^*, k\in\{1,\dots,|\mathcal{E}|\}$ in the following result, that is meaningful when the eventual desired $v^* \neq 0$.
\begin{proposition}
	\label{prop: kappa}
	Consider $\omega_k = \omega^* > 0, k\in\{1,\dots,|\mathcal{E}|\}$ in (\ref{eq: Lwk}). If $\mathcal{G}$ is a tree graph, and 
\begin{equation}
	|\kappa| < \omega^* \frac{\lambda_{\text{min}}(\overline{B^TB})}{||\overline B^T\hat M||_2},
\end{equation}
	where $||X||_2$ and $\lambda_{\text{min}}(X)$  denote the spectral norm and  the smallest eigenvalue of $X$ respectively, then $|\kappa|$ is sufficiently small as required in Lemma \ref{lem: lmod} and Theorem \ref{thm: movdis}.
\end{proposition}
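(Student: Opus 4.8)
The plan is to turn the stated bound into a guarantee that every nonzero eigenvalue of the modified Laplacian $(\overline L - \kappa\hat\Lambda)$ lies in the open right half-plane, which is exactly the spectral conclusion demanded by Lemma \ref{lem: lmod} and thus by Theorem \ref{thm: movdis}. With $\omega_k = \omega^*$ for all $k$ we have $\overline L = \omega^*\,\overline B\,\overline B^T$, and using $\hat\Lambda = \hat M\overline B^T$ from (\ref{eq: Lamh}) the modified Laplacian factors as
\begin{equation}
	\overline L - \kappa\hat\Lambda = \left(\omega^*\overline B - \kappa\hat M\right)\overline B^T.
\end{equation}
Since $\overline B^T\!\left(\mathbf{1}_n\otimes\mathbf{\hat 1}_l\right) = 0$ for each $l$, exactly as in the proof of Lemma \ref{lem: lmod}, the matrix carries $m$ eigenvalues equal to zero with eigenvectors $\mathbf{1}_n\otimes\mathbf{\hat 1}_l$; the remaining $m(n-1)$ eigenvalues are the ones I need to control.

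First I would reduce the problem to a smaller matrix. Writing $A := \omega^*\overline B - \kappa\hat M$, the products $A\overline B^T$ (of size $mn\times mn$) and $\overline B^T A$ (of size $m(n-1)\times m(n-1)$, since $|\mathcal{E}| = n-1$ for a tree) share the same nonzero eigenvalues with identical algebraic multiplicities. Because $\mathcal{G}$ is a tree, $B$ has full column rank, so $B^TB$ — and hence $\overline{B^TB} = \overline B^T\overline B$ — is symmetric positive definite. Consequently the reduced matrix
\begin{equation}
	N := \overline B^T A = \omega^*\,\overline{B^TB} - \kappa\,\overline B^T\hat M
\end{equation}
is a perturbation $N = P - E$ of the symmetric positive-definite matrix $P := \omega^*\,\overline{B^TB}$ by $E := \kappa\,\overline B^T\hat M$. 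If I can show that every eigenvalue of $N$ has strictly positive real part, then each is nonzero, these $m(n-1)$ eigenvalues are precisely the nonzero eigenvalues of $(\overline L - \kappa\hat\Lambda)$, and the full spectrum then consists of $m$ zeros together with $m(n-1)$ eigenvalues of positive real part, as required.

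The final step is a numerical-range estimate. Let $\lambda$ be any eigenvalue of $N$ with a unit eigenvector $\xi$, and let $(\cdot)^{H}$ denote the conjugate transpose. Then $\lambda = \xi^{H}P\xi - \xi^{H}E\xi$, where $\xi^{H}P\xi$ is real and satisfies $\xi^{H}P\xi \geq \lambda_{\text{min}}(P) = \omega^*\lambda_{\text{min}}(\overline{B^TB})$, while $|\xi^{H}E\xi| \leq ||E||_2 = |\kappa|\,||\overline B^T\hat M||_2$. Taking real parts gives
\begin{equation}
	\operatorname{Re}(\lambda) \geq \omega^*\lambda_{\text{min}}(\overline{B^TB}) - |\kappa|\,||\overline B^T\hat M||_2,
\end{equation}
which is strictly positive precisely when $|\kappa| < \omega^*\,\lambda_{\text{min}}(\overline{B^TB})/||\overline B^T\hat M||_2$, the claimed bound. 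I expect the main obstacle to be the bookkeeping in the reduction step: one must justify cleanly that for a tree the reversed product $N$ reproduces exactly the nonzero spectrum of $(\overline L - \kappa\hat\Lambda)$, with the $m$ zero eigenvalues counted once and only once. Once that correspondence is pinned down, the positive definiteness of $\overline{B^TB}$ and the numerical-range inequality make the remainder routine.
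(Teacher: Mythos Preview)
Your proof is correct but takes a different route from the paper. The paper works directly with the error signal $e := z - z^* = \overline B^T(p - p^*)$, derives from (\ref{eq: Lwtil3}) the closed-loop error dynamics
\[
\dot e \;=\; -\omega^*\,\overline{B^TB}\,e \;+\; \kappa\,\overline B^T\hat M\,e,
\]
and applies the quadratic Lyapunov function $V = \tfrac{1}{2}\|e\|^2$ to obtain
\[
\dot V \;\leq\; -\big(\omega^*\lambda_{\text{min}}(\overline{B^TB}) - |\kappa|\,\|\overline B^T\hat M\|_2\big)\|e\|^2,
\]
so the stated bound yields exponential convergence of $e$ to zero and hence of $\dot p$ to $\mathbf{1}_n\otimes v^*$. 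Your spectral reduction via $\overline B^T A$ produces precisely the matrix governing these error dynamics: your $N$ is the negative of the drift matrix for $\dot e$, so your numerical-range inequality and the paper's Lyapunov estimate are the same computation in two dialects. The difference is in what is concluded: you target directly the eigenvalue statement of Lemma~\ref{lem: lmod}, whereas the paper bypasses the spectrum entirely and establishes the dynamical conclusion of Theorem~\ref{thm: movdis} (convergence of $z$ and $\dot p$) in one stroke. Your approach has the merit of explaining cleanly, via the $AB$-versus-$BA$ characteristic-polynomial identity, why the zero eigenvalue of $(\overline L-\kappa\hat\Lambda)$ has multiplicity exactly $m$ with the expected eigenspace; the paper's Lyapunov route, in turn, delivers an explicit exponential rate for $e$ without any bookkeeping on algebraic versus geometric multiplicities.
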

\begin{proof}
	Let us define the error signal $e(t) := z(t) - z^*$, where $z$ is as in (\ref{eq: z}) and $z^* = \overline B^Tp^*$. We are going to find a bound for $|\kappa|$ such that we can guarantee $e(t)\to 0$ as $t\to\infty$. In order to calculate the dynamics of the error signal $e$ we first accommodate (\ref{eq: Lwtil3}) as follows
\begin{align}
	\dot p &= -\overline{BD_\omega B^T}(p-p^*) + \kappa\hat \Lambda p \nonumber \\
	&= -\omega^*\overline {B} e + \kappa \hat M z \nonumber \\
	&= -\omega^*\overline {B} e + \kappa \hat M e + \kappa\hat  M z^* \nonumber \\
	&= -\omega^*\overline {B} e + \kappa \hat M e + \kappa (\mathbf{1}_n\otimes v^*) \label{eq: ppro}
\end{align}
and knowing that $\dot e = \dot z = \overline B^T \dot p$, we have that
\begin{align}
\dot e &= -\omega^*\overline B^T \overline {B} e + \kappa \overline B^T\hat M e + \kappa \overline B^T (\mathbf{1}_n\otimes v^*) \nonumber \\
&= -\omega^*\overline B^T \overline {B} e + \kappa \overline B^T\hat M e.
\end{align}
Consider the following Lyapunov candidate function $V = \frac{1}{2}||e||^2$, then we have that
\begin{align}
	\frac{\mathrm{d}V}{\mathrm{dt}} &= - \omega^* e^T \overline{B^TB}e + \kappa e^T\overline B^T\hat M e \nonumber \\
	&\leq -\left(\omega^*\lambda_{\text{min}}(\overline{B^TB}) - |\kappa|\,|| \overline B^T\hat M||_2\right)||e||^2,
\end{align}
	therefore if $|\kappa| < \omega^* \frac{\lambda_{\text{min}}(\overline{B^TB})}{||\overline B^T\hat M||_2}$, then $\frac{\mathrm{d}V}{\mathrm{dt}} \leq -\alpha V$ for some $\alpha > 0$. Note that $|\kappa|$ can be bigger than zero since $\lambda_{\text{min}}(\overline{B^TB}) > 0$ because $\mathcal{G}$ is a tree graph. Since $e(t) \to 0$ exponentially fast as $t\to\infty$, we have that in (\ref{eq: ppro}) $\dot p(t) \to \kappa (\mathbf{1}_n\otimes v^*)$ exponentially fast as well as $t\to\infty$. Therefore, the configuration $p(t)$ converges to the desired shape $\mathcal{S}$ following the translational velocity $\kappa(\mathbf{1}_n\otimes v^*)$.
\end{proof}
For general graphs containing cycles, we would need to find a coordinate transformation for $z$ such that we separate the independent coordinates from the dependent ones. For example, in a \emph{triangular formation} with $\mathcal{E} = \{(1,2),(2,3),(3,1)\}$ we have three relative positions with their respective dynamics. However, they are not independent since we have the constraint $z_{12} + z_{23} + z_{31} = 0$. According to the definition of the error signal $e(t)$ in Proposition \ref{prop: kappa} we have that $(z-z^*)^T \overline B^T\overline B(z-z^*) = 0$ for some non-zero $(z-z^*)$. If we find the Jordan form $J$ of $\overline B^T\overline B$ such that $\overline B^T\overline B = T\left[\begin{smallmatrix}J_1 & 0 \\ 0 & J_2\end{smallmatrix}\right]T^{-1}$ with $J_2$ being the Jordan block with all zeros in its diagonal, then we can apply the coordinate transformation $Tz = \left[\begin{smallmatrix}z_{\text{indep}} \\ z_{\text{dep}} \end{smallmatrix}\right]$ so that we can look for an admissible $\kappa$ by carrying out a stability analysis as in Proposition \ref{prop: kappa} but focusing on the error signal of the independent coordinates $(z_{\text{indep}}(t)- z_{\text{indep}}^*)$.

\section{Robustness issues due to imperfect sensing}
Following the spirit in \cite{mou2016undirected,Marina15,meng2016formation}, this section uncovers that an eventual (non-desired) motion, together with a \emph{distorted} shape, of the formation can be triggered by a disagreement over the measurement of the relative position between two neighboring agents $i$ and $j$. Being more specific, an agent might not measure a relative position correctly, e.g.,
\begin{align}
	(p_j-p_i)|_{\text{measurement}} &= aR(p_j -p_i)|_{\text{actual relative position}}, \nonumber \\ a&\in\mathbb{R}_+, R\in SO(m).
	\label{eq: sfactor}
\end{align}
The relation (\ref{eq: sfactor}) might be given when a robot measures such a relative position with two sensors, namely, range and direction, i.e., $z_{ij} = ||z_{ij}||\frac{z_{ij}}{||z_{ij}||}$. We then consider that the range sensor has a different scale factor than $1$, and the direction sensor is biased by a constant rotation matrix $R\neq I_m$, e.g., it can be seen as a misaligned compass with respect to \emph{North} in 2D or $m=2$. The disagreement between two neighboring agents shows up when their sensors have a different scale factor and/or misalignment, e.g., $a_i\neq a_j$ and $R_i\neq R_j$. 

Let us illustrate the robustness issue with the simple example in $m$-dimensions where the graph of the framework is given by $\mathcal{N} = \{1,2\}$ and $\mathcal{E} = \{(1,2)\}$. In such a case, the agents implement the control action derived from (\ref{eq: uLcon}) by considering $\omega_1 = 1$
\begin{equation}
	\begin{cases}
	\dot p_1 &= -(z_{12} - z_{12}^*) \\
	\dot p_2 &= -(z_{21} - z_{21}^*).
	\end{cases}
	\label{eq: dynsf}
\end{equation}
However, the agents execute (\ref{eq: dynsf}) with their onboard measurements. Consider that the agent $1$ measures correctly the relative position $(p_i-p_j)$, but agent $2$ measures it with an arbitrary scale factor as in (\ref{eq: sfactor}). Assume further that agents $1$ and $2$ have the same target for their relative positions, i.e., $z_{12}^* = -z_{21}^*$. Since the authors in \cite{meng2016formation} have covered the \emph{2-agent} case in 2D for $R\neq I_2$ (and implicitly $a = 1$) in (\ref{eq: sfactor}), let us focus solely on the two agents with different scale factors, e.g., $1$ and $a\in\mathbb{R}_+\setminus\{1\}$, but same alignments $I_m$. Then, (\ref{eq: dynsf}) will be executed by the agents as
\begin{equation}
	\begin{cases}
	\dot p_1 &= -(I_mz_{12} - z_{12}^*) \\
		\dot p_2 &= aI_mz_{12} - z_{12}^*,
	\end{cases}
	\label{eq: exscale}
\end{equation}
and with a bit of algebraic manipulation we arrive at
\begin{equation}
	\begin{cases}
		\dot p_1 &= -\big((b+1)z_{12} - z_{12}^*\big) + bz_{12}\\
		\dot p_2 &= \big((a-c)z_{12} - z_{12}^*\big) + cz_{12},
	\end{cases}
	\label{eq: dynsf2}
\end{equation}
where the values of $b,c\in\mathbb{R}$ will be found out shortly. By inspecting the first terms of (\ref{eq: dynsf2}), we can deduce that the formation will achieve a \emph{distorted} steady-state shape satisfying
\begin{equation}
\tilde z_{12} = \frac{z_{12}^*}{b+1} = \frac{z_{12}^*}{a-c},
\label{eq: shsf}
\end{equation}
therefore, we have that $b + 1 = a - c$. If we consider the second terms in (\ref{eq: dynsf2}) responsible for a common \emph{residual} velocity, then the second equation to find out the values of $b$ and $c$ is given by the non-desired eventual velocity of the formation when the two agents are at the \emph{distorted} steady-state relative position (\ref{eq: shsf}), i.e., when $\lim_{t\to\infty}\dot p_1(t) = \lim_{t\to\infty}\dot p_2(t) = b\frac{z_{12}^*}{b+1} = c\frac{z_{12}^*}{a-c}$. Then we arrive at $b = c = \frac{a-1}{2}$. This example reveals that a different scale factor between two agents not only induce an \emph{expected} distortion in the desired shape but an \emph{unexpected} translational motion, whose speed depends on how far from one is the scale factor in agent $2$, i.e., $\lim_{t\to\infty}\dot p_1(t) = \lim_{t\to\infty}\dot p_2(t) = \frac{a-1}{a+1}z^*_{12}$.

For general frameworks, it is a matter of finding out the particular relations between the different scale factors and misalignments of the agents and the eventual distorted shape so that we can find out a matrix (to be introduced later) playing a similar role as $\hat M$ in (\ref{eq: Mhat}) to determine whether there is a residual steady-state motion. 

Let $a = \begin{bmatrix}a_1 \dots a_n\end{bmatrix}^T$ be the stacked vector of scale factors $a_i \in\mathbb{R}^+$ for each agent, and let $R = \begin{bmatrix}R_1^T \dots R_n^T\end{bmatrix}^T$ be the stacked matrix of rotational matrices $R_i\in SO(m)$ representing the misalignment for each agent. Finally, let us define $D_x := \overline D_a D_R$, and note that $D_x$ is always invertible since $a_i > 0, i\in\mathcal{V}$. Then, by following the example (\ref{eq: exscale}), we  add the scale factors and the misalignments of each agent to the closed loop derived from (\ref{eq: Lcon}) as
\begin{equation}
\dot p = - D_x\overline L p + \overline Lp^*.
	\label{eq: cldx}
\end{equation}
Note that the dynamics (\ref{eq: cldx}) that consider imperfect measurements are substantially different than (\ref{eq: Lwtil3}) with the design of a desired translational motion. In particular, the \emph{modified} Laplacian matrix for the $m$-dimensional case is now $D_x\overline L$ and not just the \emph{linear modification} $(\overline L-\kappa\hat\Lambda)$. Nevertheless, they both share $0$ as a single eigenvalue whose eigenvector is $\mathbf{1}_{mn}$, and a statement similar to Lemma \ref{lem: lmod} can be made if $D_x\approx I_{mn}$ (instead of $|\kappa|$ sufficiently small). Similarly as in (\ref{eq: dynsf2}), let us now rewrite (\ref{eq: cldx}) as
\begin{align}
	\dot p &= - D_x\overline L p + \overline Lp^* + \breve M\overline B^Tp -  \breve M\overline B^Tp \nonumber \\
	&= - D_x(\overline B\overline D_\omega \overline B^T)p + \overline Lp^* + \breve M\overline B^Tp - \breve M\overline B^Tp \nonumber \\
	&= -(D_x\overline B\overline D_\omega + \breve M)\overline B^Tp + \overline Lp^* + \breve M\overline B^Tp,
	\label{eq: compactsf}
\end{align}
where $\breve M$ has the same dimensions as $\hat M$ in (\ref{eq: Mhat}). However, the elements of $\breve M$ are not constructed from any $\mu_{ij}$. The matrix $\breve M$ plays the role of assisting us in understanding and calculating the \emph{residual} steady-state velocity emerging from wrong measurements. The simulations will indicate that all the elements of $\breve M$ are different from zero in general. Physically, this fact reveals that all the relative positions of the framework contribute to the steady-state velocity of the agent $i$. Let us now introduce and define formally the distorted shape $\tilde{p}^*\in\mathbb{R}^{mn}$, and the residual steady-state velocity $\tilde v^*\in\mathbb{R}^m$. Similarly as in (\ref{eq: dynsf2}), we will find out the values for $\tilde{p}^*$, $\tilde v^*$ and $\breve M$ such that they satisfy the two coupled conditions extracted from (\ref{eq: compactsf})
\begin{equation}
	\begin{cases}
		(D_x\overline B\overline D_\omega + \breve M)\overline B^T \tilde p^* =  \overline L p^*\\
	\breve M\overline B^T \tilde p^* = \mathbf{1}_n\otimes \tilde v^*,
	\end{cases}
	\label{eq: condsfcases}
\end{equation}
that can be combined in the following single condition
\begin{equation}
	D_x \overline L\tilde p^* - \overline Lp^* = -(\mathbf{1}_n\otimes \tilde v^*).
\label{eq: condsf}
\end{equation}
Note that the first condition in (\ref{eq: condsfcases}) is related to the (static) equilibrium $\tilde p^*$ of a \emph{distorted}-displacement-based formation controller. We identify such controller from the first two terms in (\ref{eq: compactsf}), i.e,  $\dot p = -(D_x\overline B\overline D_\omega + \breve M)\overline B^Tp + \overline Lp^*$. Similarly as in Theorem \ref{thm: movdis}, if we are at the equilibrium $\tilde p^*$, then the first two terms in (\ref{eq: compactsf}) vanish and the third one is the responsible for the residual motion of $\tilde p^*$.

We have a trivial case for (\ref{eq: condsf}) when $a_i = a^* \in\mathbb{R}^+, R_i=R^*, i\in\mathcal{V}$. Noting first that $(I_n \otimes R^*)(L \otimes I_m) = (L\otimes R^*) = (L \otimes I_m)(I_n \otimes R^*)$, then we have that
\begin{equation}
	\overline L (a^* (I_n\otimes R^*) \tilde p^* - p^*) = -(\mathbf{1}_n\otimes \tilde v^*),
	\label{eq: condsf2}
\end{equation}
which can be satisfied if and only if $\tilde v^* = 0$, and $\tilde p^* = \frac{1}{a^*}(I_n\otimes {R^*}^T)p^* + (\mathbf{1}_n\otimes b)$, with $b\in\mathbb{R}^m$ being an arbitrary (offset) vector. Note that this is the only solution to the trivial case since $\mathbf{1}_n \notin \operatorname{Im}\{L\}$, therefore we have to make the bracket in the left hand side of (\ref{eq: condsf2}) to be in the kernel of $\overline L$. If all the agents have the same (wrong) perception about the relative positions, e.g., all the agents share the same scale factor $a^*$ and misalignment $R^*$ for measuring $z_{ij}$, then they will achieve a distorted but eventually static shape $\tilde p^* = \frac{1}{a^*}(I_n\otimes {R^*}^T)p^*$ if $R^*\approx I_m$, so we do not perturb much the positive eigenvalues of $\overline L$ in (\ref{eq: cldx}) as we will see. Let us now introduce the following more general result.
\begin{theorem}
\label{thm: 2}
Consider a desired shape constructed from $p^*$ and a framework $\mathcal{F}$ with a connected graph $\mathcal{G}$ without any cycles, and also consider the control action (\ref{eq: Lcon}) for the dynamics (\ref{eq: pdyn}). Consider that the sensing of the agents is close to perfect, i.e., $D_x \approx I_{mn}$. If at least one agent $l\in\mathcal{V}$ has a different scale factor and/or misalignment among all $a_i\in\mathbb{R}^+, R_i\in SO(m), i\in\mathcal{V}\setminus l$ for measuring its available relative positions $z_{lj}, j\in\mathcal{N}_l$ as in (\ref{eq: sfactor}), then the framework $\mathcal{F}$ will display a steady-state distorted shape $\tilde p^*$ (close to $p^*$) travelling with a residual steady-state (in general non-zero) velocity $(\mathbf{1}_n \otimes \tilde v^*)$.
\end{theorem}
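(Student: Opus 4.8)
The plan is to mirror the proof of Theorem \ref{thm: movdis}: write the closed loop (\ref{eq: cldx}) as $\dot p + D_x\overline L p = \overline L p^*$ and split its solution into a homogeneous part plus the particular solution $p_p(t) = (\mathbf 1_n\otimes\tilde v^*)t + \tilde p^*$. Substituting $p_p$ and using $\overline L(\mathbf 1_n\otimes\tilde v^*) = (L\mathbf 1_n)\otimes\tilde v^* = 0$ kills the $t$-dependent term, so the requirement on $p_p$ collapses to exactly the combined condition (\ref{eq: condsf}). The theorem therefore reduces to two tasks: (i) showing the homogeneous part decays to a constant, and (ii) solving (\ref{eq: condsf}) for $\tilde p^*$ and $\tilde v^*$.

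For the homogeneous flow $\dot p = -D_x\overline L p$, I would first note that invertibility of $D_x$ gives $\ker(D_x\overline L) = \ker\overline L = \operatorname{span}\{\mathbf 1_n\otimes\mathbf{\hat 1}_l\}_{l=1}^m$, i.e. the consensus subspace. As in Lemma \ref{lem: lmod}, $D_x\approx I_{mn}$ keeps the remaining eigenvalues arbitrarily close to the strictly-positive-real-part eigenvalues of $\overline L$ by continuity. The point needing care is that the zero eigenvalue stays semisimple, since otherwise a generalized eigenvector would inject spurious linear-in-$t$ growth; I would verify $\ker(D_x\overline L)\cap\operatorname{Im}(D_x\overline L) = \{0\}$, which, writing $D_x^{-1} = \operatorname{diag}(a_i^{-1}R_i^T)$, boils down to invertibility of the $m\times m$ matrix $S := \sum_{i\in\mathcal V} a_i^{-1}R_i^T$. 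Since $S\approx nI_m$ for $D_x\approx I_{mn}$, this holds, and the homogeneous solution converges to a constant translation $\mathbf 1_n\otimes c_\infty$ inside the consensus subspace.

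The heart of the argument is solving (\ref{eq: condsf}). Using $\overline L = \overline B\,\overline D_\omega\overline B^T$ and projecting the equation with $(\mathbf 1_n^T\otimes I_m)$, the term in $\tilde p^*$ dies because $\mathbf 1_n^T B = 0$, leaving the $m$ scalar equations $S\tilde v^* = g$ with $g := (\mathbf 1_n^T\otimes I_m)D_x^{-1}\overline L p^*$; invertibility of $S$ pins down a unique $\tilde v^*$. Feeding this back, the residual equation $\overline L\tilde p^* = D_x^{-1}(\overline L p^* - \mathbf 1_n\otimes\tilde v^*)$ has a right-hand side in $\operatorname{Im}\overline L$ by the very choice of $\tilde v^*$, so it is solvable. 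Here the tree hypothesis is decisive, since injectivity of $\overline B$ (equivalently $\lambda_{\min}(\overline{B^TB})>0$, as in Proposition \ref{prop: kappa}) forces $\tilde z := \overline B^T\tilde p^*$ to be unique and hence determines the distorted shape $\tilde p^*$ up to translation. Continuity as $D_x\to I_{mn}$ gives $g\to 0$, so $\tilde v^*\to 0$ and $\tilde p^*\to p^*$, which is the ``close to $p^*$'' claim.

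Finally, to see that the residual velocity is ``in general'' nonzero, I would read off $\tilde v^* = S^{-1}g$. The trivial case $a_i=a^*,\,R_i=R^*$ recovers $g=0$ and $\tilde v^*=0$ as in (\ref{eq: condsf2}); but if a single agent $l$ deviates while the rest are perfect, then $\sum_i(\overline L p^*)_i = 0$ gives $g = (a_l^{-1}R_l^T - I_m)(\overline L p^*)_l$, which is nonzero whenever agent $l$'s net weighted edge vector $(\overline L p^*)_l$ is nonzero and is not annihilated by $a_l^{-1}R_l^T - I_m$. The main obstacle I anticipate is precisely this last step: establishing $\tilde v^*\neq 0$ is only a generic statement, so the argument must isolate the measure-zero degenerate configurations (e.g. $(\overline L p^*)_l = 0$, or a scale/rotation pair that exactly fixes that vector) rather than claim nonvanishing unconditionally; reassuringly, the same invertibility of $S$ that drives the stability analysis is what keeps $\tilde v^*$ well defined throughout.
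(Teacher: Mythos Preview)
Your approach is correct and genuinely different from the paper's. The paper left-multiplies (\ref{eq: condsf}) by $\overline B^T$, which kills the $\tilde v^*$ term and gives the closed form $\tilde z^* = (\overline B^T D_x\overline B\,\overline D_\omega)^{-1}\overline B^T\overline B\,\overline D_\omega z^*$; this is precisely where the tree hypothesis enters, since $\overline B^T\overline B$ (and hence $\overline B^T D_x\overline B$ for $D_x\approx I$) is invertible only when $\mathcal G$ has no cycles. It then back-substitutes for $\tilde v^*$ and derives the auxiliary matrix $\breve M$ explicitly, showing $\breve M=0$ exactly when $D_x = a^*(I_n\otimes R^*)$, which ties the residual drift to the maneuvering machinery of Section~\ref{sec: displacement}. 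You instead premultiply by $D_x^{-1}$ and project onto the consensus direction, obtaining $\tilde v^* = S^{-1}g$ in one step and never touching $\breve M$. Two remarks are worth making. First, your semisimplicity check for the zero eigenvalue via invertibility of $S=\sum_i a_i^{-1}R_i^T$ is a genuine addition; the paper handles the homogeneous part only by analogy with Theorem~\ref{thm: movdis} and a perturbation appeal. Second, your invocation of the tree hypothesis for uniqueness of $\tilde z^*$ is actually unnecessary: once $\tilde v^*$ is fixed, $\overline L\tilde p^* = D_x^{-1}(\overline Lp^* - \mathbf 1_n\otimes\tilde v^*)$ is solvable for any connected $\mathcal G$, and $\tilde z^* = \overline B^T\tilde p^*$ is then determined regardless of cycles, so your argument is in fact more general than the paper's. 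What the paper's route buys is the explicit $\breve M$ and its structural link to the motion-parameter matrix $\hat M$; what yours buys is a direct formula for $\tilde v^*$ and a concrete handle on the ``generically nonzero'' claim, which the paper itself leaves open and supports only by simulation.
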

\begin{proof}
We first calculate the steady-state \emph{distorted} relative position vectors resulting from (\ref{eq: cldx}), i.e, $\tilde z^* = \overline B^T \tilde p^*$. We start with multiplying by $\overline B^T$ both sides of (\ref{eq: condsf})
\begin{align}
	\overline B^T D_x\overline B\overline D_\omega \tilde z^* - \overline B^T\overline B \overline D_\omega z^* = 0 \nonumber \\
	\tilde z^* = (\overline B^T D_x\overline B\overline D_\omega)^{-1}\,\, \overline B^T\overline B \overline D_\omega z^*, \label{eq: ztilde}
\end{align}
	since $\mathcal{G}$ does not contain any cycles, then the inverse matrix in (\ref{eq: ztilde}) exists. Note that if $D_x = I_{mn}$ then $\tilde z^* = z^*$. The residual velocity $\tilde v^*$ for the formation is calculated then from (\ref{eq: condsf}) as
\begin{align}
&\Big(D_x\overline B\overline D_\omega (\overline B^T D_x\overline B\overline D_\omega)^{-1}\,\, \overline B^T\overline B \overline D_\omega - \overline B\overline D_\omega \Big) z^* =  \nonumber \\
&\Big(D_x\overline B (\overline B^T D_x\overline B)^{-1}\,\, \overline B^T\overline B - \overline B \Big) \overline D_\omega z^* = -(\mathbf{1}_n \otimes \tilde v^*).
	\label{eq: vtilde}
\end{align}
Before the calculation of $\breve M$ in (\ref{eq: condsfcases}) we need to check that
\begin{align}
	&(B \otimes I_m)^T(I_n\otimes R^*)(B \otimes I_m) = (B^T \otimes I_m)(B\otimes R^*) = \nonumber \\
	&=(B^TB \otimes R^*) = (B^TB \otimes I_m)(I_{|\mathcal{E}|}\otimes R^*).
	\label{eq: use}
\end{align}
Now we take (\ref{eq: ztilde}) and (\ref{eq: vtilde}) for the left-hand and right-hand sides respectively of the second condition in (\ref{eq: condsfcases}).
\begin{align}
&\breve M (\overline B^TD_x\overline B\overline D_\omega)^{-1}\overline B^T\overline B\overline D_\omega z^* = \nonumber \\ &= -\Big (D_x\overline B(\overline B^TD_x\overline B)^{-1}\overline B^T\overline B - \overline B \Big)\overline D_\omega z^*,
\end{align}
therefore we can deduce that
\begin{align}
	\breve M &= -\Big (D_x\overline B(\overline B^TD_x\overline B)^{-1}\overline B\overline B^T - \overline B \Big)(\overline B^T\overline B)^{-1}(\overline B^TD_x\overline B\overline D_\omega) \nonumber \\
	&= -\Big( D_x\overline B - \overline B(\overline B^T\overline B)^{-1} \overline B^TD_x\overline B\Big) \overline D_\omega. \label{eq: Mbreve}
\end{align}
	Now we are going to show that if $D_x = a^*(I_n\otimes R^*)$, then $\breve M = 0$. Then, focusing on (\ref{eq: Mbreve}) by exploiting (\ref{eq: use}) we have that
\begin{align}
	\breve M &=-\Big(a^*(I_n\otimes R^*)\overline B - a^*\overline B(\overline B^T\overline B)^{-1}(\overline B^T\overline B)(I_{|\mathcal{E}|}\otimes R^*)\Big)\overline D_\omega \nonumber \\
	&= -\Big(a^*(I_n\otimes R^*)\overline B - a^*(I_n\otimes R^*)\overline B \Big)\overline D_\omega = 0
	\label{eq: RA}
\end{align}
Therefore, if $D_x \approx a^*(I_n\otimes R^*)$ with at least one block diagonal element different than the rest (for example, coming from the agent $l\in\mathcal{V}$), then $\breve M$ is close to $0$. Similarly as in Theorem \ref{thm: movdis}, if $D_x \approx I_{mn}$ so that we do not perturb much the non-zero eigenvalues of $\overline L$, then we can check that the solution to the homogeneous part of (\ref{eq: cldx}) satisfies $p_h(t) \to \sum_{l=1}^m \left(c_l(\mathbf{1}_n \otimes \mathbf{\hat 1}_l)\right)$ as $t\to\infty$ with $c_l\in\mathbb{R}$ depending on the initial condition $p(0)$, and that a particular solution of (\ref{eq: cldx}) is $p_p(t) = \tilde p^* + (\breve M \overline B^T \tilde p^*) t = \tilde p^* + (\mathbf{1}_n \otimes \tilde v^*)t$. Note that we do not need explicitly $\tilde p^*$ but $\overline B^T\tilde p^* = \tilde z^*$ in (\ref{eq: cldx}) to test the particular solution. Nevertheless, similarly as in (\ref{eq: pstar}), the distorted configuration $\tilde p^*$ can be obtained from placing the origin $O_g$ at the center of masses $O_b$ of the shape described by the calculated $\tilde z^*$ in (\ref{eq: ztilde}).
\end{proof}
Note that the residual steady-state velocity from (\ref{eq: vtilde}) satisfies $\breve M\tilde z^* = (\mathbf{1}\otimes \tilde v^*)$ which in our simulations it seems to be different from zero if $D_x \approx a^*(I_n \otimes R^*)$ with different diagonal blocks. However, we have not proven that $\tilde z^* \notin \operatorname{Ker}\{\breve M\}$. This leaves an open problem on how to design $\mathcal{G}$, $\omega$, and other parameters in the framework such that the resultant formation is more robust against disagreements in sensing among the agents. Recently, the authors in \cite{ahn2019consensus} analyzed the stability condition $D_x\approx I_{2n}$ with scale factor $a=1$ for 2D formations.
\begin{remark}
According to (\ref{eq: sfactor}), agents have always the right perception about $0$ regardless of the scale factor and misalignment, i.e, the traditional consensus algorithm with $p^* = 0$ does not suffer from the uncovered robustness issues.
\end{remark}

One ad hoc solution to avoid this eventual undesired motion might be to fix one of the agents in the formation, e.g., set $\dot p_1 = 0$. However, displacement-based formation control might be required to work together with other algorithms, possibly involving the motion of the whole formation as illustrated in \cite{kia2018tutorial}. Note that this ad hoc solution will not prevent either the system to be unstable if one of the eigenvalues of the corresponding \emph{perturbed}/\emph{modified} Laplacian matrix is on the left-half plane. Therefore, fixing one agent will not be a definite solution to the uncovered problem in this paper.

\section{Numerical experiment}
\label{sec: simulations}
In this section we validate the results from Theorem \ref{thm: 2}. We choose the reference shape $p^*$ and graph $\mathcal{G}$ as in Figure \ref{fig: sqdis9}. We generate randomly the following vectors of scale factors $a=$
\scalebox{0.48}{
	$\begin{bmatrix} 0.96843302 & 1.00873027 &  0.9546316 &  1.04510691 &  1.02358278 &  0.95203593 &  1.04006459 &  0.96226732 &  0.98482596 \end{bmatrix}$
}
and misalignments (in radians) between $\pm 10$ degrees $r=$ 
\scalebox{0.48}{
	$\begin{bmatrix}-0.15850664 & -0.13158391 & -0.07226048 & -0.07021736 &  0.03995607 & -0.11761143 & -0.0692078 & 0.16551018 & 0.1331908
\end{bmatrix}$
}
so that we take each element of $r$ and construct a 2D rotational matrix. From the results in Theorem \ref{thm: 2} we predict the following distortion $\tilde z^*$ for the eventual relative positions, and we compare it with the desired one (we stack the $z_{ij}^T$ for better visualization)
\begin{equation}\left[
		\begin{smallmatrix}1.26245792 & -10.06235038 \\   0.91330118 &  -9.92409706 \\  -10.65052789 &  -0.06477112 \\  -2.10520906 & 10.3052228 \\ -0.86465511 &  10.27960119 \\ -11.3181448 & 0.60885649 \\ -0.33526666 & 9.43683404 \\ -1.0403239 & -10.29440935\end{smallmatrix}\right] \,
		\left[\begin{smallmatrix}0 & -10 \\  0 & -10 \\ -10 &  0 \\  0 &  10 \\ 0 & 10 \\ -10 & 0 \\ 0 &  10\\ 0 & -10 \end{smallmatrix}\right] \nonumber.
\end{equation}
In Figure \ref{fig: errors}, we show how the signal $||z(t) - \tilde z^*||$ converges to zero as predicted, and we can notice such a distortion in the eventual shape described by the agents. Finally, the predicted residual velocity is $\tilde v^* = \begin{bmatrix}0.33086245 & -0.18446561\end{bmatrix}^T$ units/sec that matches with the agents' trajectories (in Figure \ref{fig: errors}) and the signal $||\frac{\mathrm{d}p(t)}{\mathrm{dt}} - (\mathbf{1}_n \otimes \tilde v^*)||$ converging to zero. 

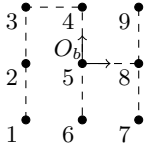
\begin{figure}
\centering
\begin{tikzpicture}[line join=round, scale=0.75]
\draw[dashed](-1,-1)--(-1,0)--(-1,1)--(0,1)--(0,0)--(0,-1);
\draw[dashed](0,0)--(1,0)--(1,-1);
\draw[dashed](1,0)--(1,1);
\filldraw(-1,-1) circle (2pt);
\filldraw(-1,0) circle (2pt);
\filldraw(-1,1) circle (2pt);
\filldraw(0,1) circle (2pt);
\filldraw(0,0) circle (2pt);
\filldraw(0,-1) circle (2pt);
\filldraw(1,-1) circle (2pt);
\filldraw(1,0) circle (2pt);
\filldraw(1,1) circle (2pt);
\draw[draw=black,arrows=->](0,0)--(0,0.5);
\draw[draw=black,arrows=->](0,0)--(0.5,0);
\node at (-0.25,0.25) {\small $O_b$ \normalsize};\node at (-.25,-.25) {\small $5$ \normalsize}; \node at (-1.25,-1.25) {\small $1$ \normalsize}; \node at (-1.25,-.25) {\small $2$ \normalsize}; \node at (-1.25,0.75) {\small $3$ \normalsize}; \node at (-.25,0.75) {\small $4$ \normalsize}; \node at (-.25,-1.25) {\small $6$ \normalsize}; \node at (0.75,-1.25) {\small $7$ \normalsize}; \node at (.75,-.25) {\small $8$ \normalsize}; \node at (.75,.75) {\small $9$ \normalsize};
\end{tikzpicture}
	\caption{Chosen reference shape $p^*$ for the numerical experiment. We choose a distance of $10$ units between agents in the $x$ and $y$ axes respectively. The ordered set of edges is $\mathcal{E} = \{(1,2),(2,3),(3,4),(4,5),(5,6),(5,8),(8,7),(8,9)\}$.}
	\label{fig: sqdis9}
\end{figure}

\begin{figure}
\centering
\includegraphics[width=0.49\columnwidth]{./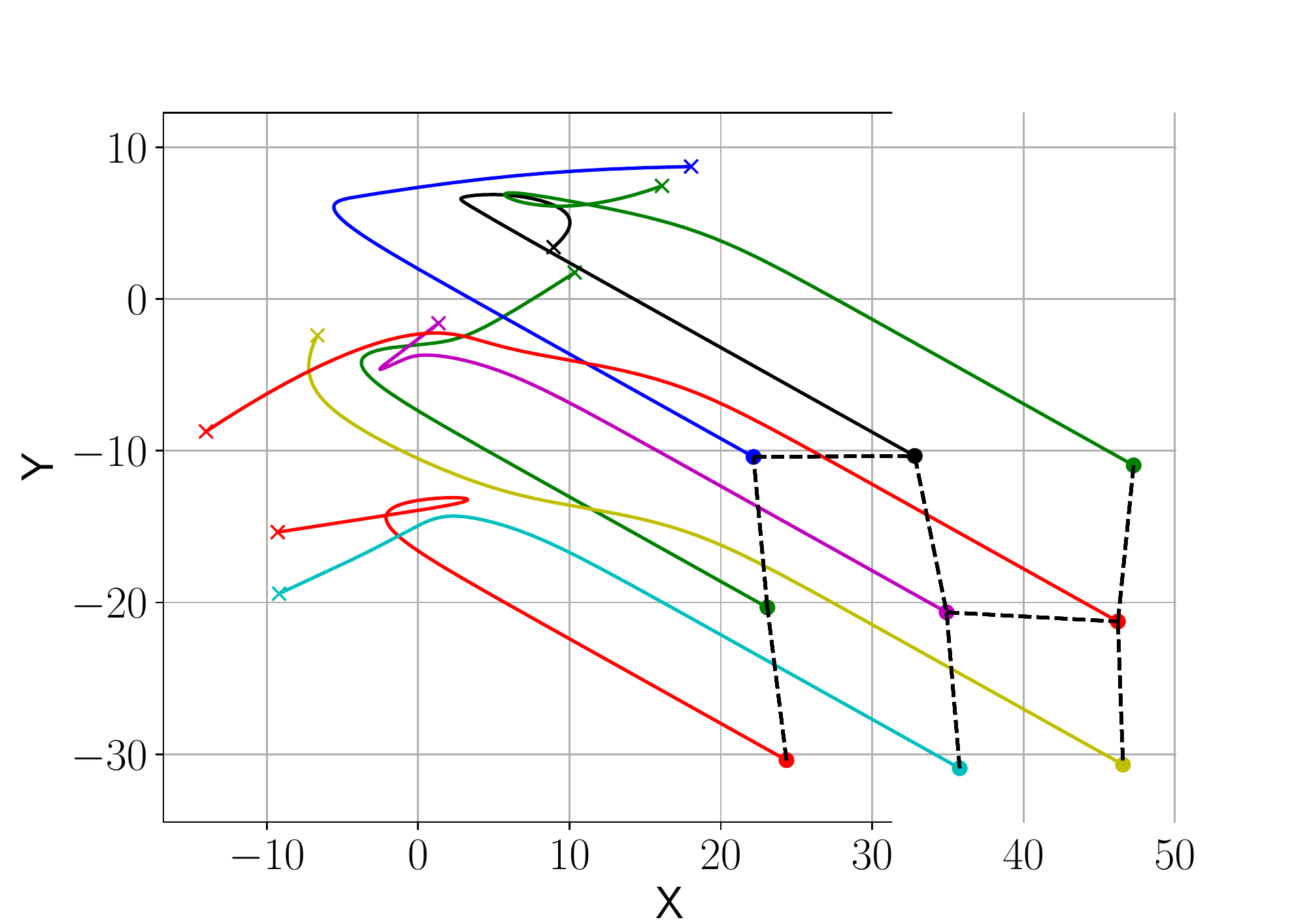}
\includegraphics[width=0.49\columnwidth]{./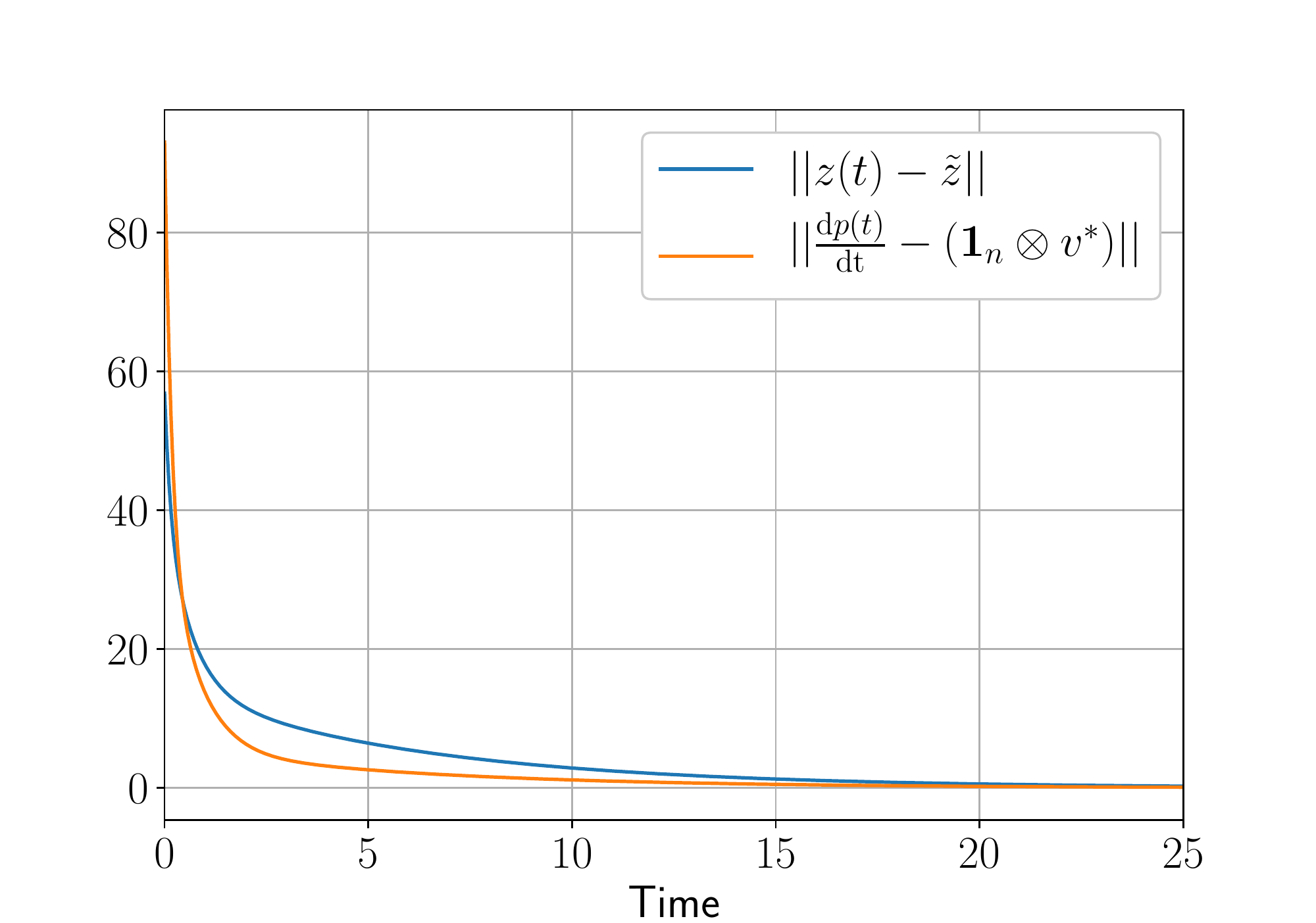}
	\caption{On the left side, we present the evolution of the agents' positions. The crosses and the solid circles denote for the $p(0)$ and $p(100)$ configurations respectively. Note the distortion of the shape at $p(100)$ with respect to the reference in Figure \ref{fig: sqdis9}, and that the formation does not stay static but travels with a constant translational velocity. On the right side, in blue color, the time evolution of the error norm between the relative positions in the framework and the predicted distorted relative positions. In orange color, the time evolution of the error norm between the velocities of the agents and the predicted residual translational velocity due to wrong measurements. }
\label{fig: errors}
\end{figure}

\section{Conclusions}
\label{sec: conc}
We have presented a technique to maneuver displacement-consensus-based formations by manipulating the weights of the standard Laplacian matrix. In fact, the standard displacement-consensus-based formation control is a particular case of our proposed technique. We uncovered that the displacement-consensus-based formation control has robustness issues when agents have different perceptions on measuring relative positions, namely, different scale factors and misalignments. These robustness issues manifest in the form of having the agents converging to a travelling \emph{distorted} shape. We have provided explicit expressions to calculate such distortion and undesired velocities in \emph{tree} formations in arbitrary $m$-dimensions.


\bibliographystyle{IEEEtran}
\bibliography{Bibs}

\end{document}